\newtheorem{lemma}{Lemma}
\newtheorem{proposition}{Proposition}
\newcommand\BibTeX{{\rmfamily B\kern-.05em \textsc{i\kern-.025em b}\kern-.08em
T\kern-.1667em\lower.7ex\hbox{E}\kern-.125emX}}
\begin{document}


\runningheads{Z. Fei et al.}{}

\articletype{RESEARCH ARTICLE}

\title{Improved LT Codes in Low Overhead Regions for Binary Erasure Channels}

\author{Zesong Fei\affil{1}\corrauth\, Congzhe Cao\affil{1},
Ming Xiao \affil{2}, Iqbal Hussain \affil{2} and  Jingming Kuang\affil{1}}

\address{\affilnum{1}School of Information and Electronics, Beijing Institute of Technology, Beijing 100081, China\\
\affilnum{2}School of Electrical Engineering, The Royal Institute of Technolgoy, Stockholm 10044, Sweden}

\corraddr{Zesong Fei is with School
of Information and Electronics, Beijing Institute of Technology, Beijing,
China. E-mail: feizesong@bit.edu.cn }

\begin{abstract}
We study improved degree distribution for Luby Transform (LT) codes which exhibits improved bit error rate performance particularly in low overhead regions. We construct the degree distribution by modifying Robust Soliton distribution. The performance of our proposed LT codes is evaluated and compared to the conventional LT codes via And-Or tree analysis. Then we propose a transmission scheme based on the proposed degree distribution to improve its frame error rate in full recovery regions. Furthermore, the improved degree distribution is applied to distributed multi-source relay networks and unequal error protection. It is shown that our schemes achieve better performance and reduced complexity especially in low overhead regions, compared with conventional schemes.
\end{abstract}


\maketitle


\section{Introduction}
Digital fountain codes have been introduced in [1] for binary erasure channels (BECs) with unknown and time varying erasure probabilities. In contrast to conventional fixed-rate codes like Low density parity check(LDPC) codes [2] [3], fountain codes are rateless codes capable of providing potentially an unlimited number of encoded symbols from a limited block of source symbols. Luby Transform (LT) codes [4] were the first practical rateless codes and Raptor codes [5] were later proposed to provide better performance by precoding. It has been shown in [4] that LT codes based on Robust Soliton distribution (RSD) exhibit good performance over BECs with unknown erasure rates, provided with sufficient overhead. However, a large bit error rate (BER) has been observed for LT codes in low overhead regions [6], [7]. A multi-source relay scheme based on distributed LT (DLT) codes has been studied in [8] where a new degree distribution is formed at the relay node. However, coding complexity is relatively high due to the spike existing in RSD. To address this problem, the scheme in [9] proposed a soliton like distribution where high degrees are spread across many degrees instead of being concentrated at a single degree. Yet the scheme requires a large buffer at the relay node. The equivalence of performance of distributed LT codes and LT codes with related parameters in the asymptotic regime is shown in \cite{AndOrTreeDLT}. In \cite{DecomposedLT}, the authors studied decomposed LT codes comprising of two layers of encoding performed collaboratively by the source and relay nodes to reduce transmission latency and energy consumption.

In this paper, to reduce the BER of LT codes in low overhead regions, we modify the degree distribution for LT codes. Meanwhile,  we propose a point-to-point transmission scheme based on the proposed degree distribution to improve the frame error rate (FER) in full recovery regions. Then we apply our proposed degree distribution to DLT codes. The performance improvement and complexity reduction have been achieved as compared to conventional LT (and DLT) codes.  Moreover, we extend our proposed scheme to achieve unequal error protection (UEP), which cannot be achieved by conventional DLT schemes in [8].

\section{Background Information}
\subsection{A Review of RSD}

RSD is a widely used degree distribution for fountain codes, which shows good performance with sufficient overhead [4]. It is defined as \(u(i)\) in the following:

The Ideal Soliton distribution is defined as \(\rho(1),...,\rho(k)\)
\begin{equation}
\left\{ \begin{array}{l}
 \rho (1) = 1/k, \\
 \rho (i) = 1/i(i - 1), i = 2,...,k, \\
 \end{array} \right.
 \end{equation}
where \(k\) is the number of source symbols.  Then, \(\Gamma(i)\) is defined as:
\begin{equation}
\Gamma (i) = \left\{ \begin{array}{l}
 R/ik,\;\;\;\;\;\;\;\;\;\;\;\;\;\;\;i = 1,...,\frac{k}{R} - 1,\;\;\;\;\;\;\;\;\;\;\;\; \\
 R\ln (R/\delta )/k,\;\;\;i = \frac{k}{R}\;, \\
 0,\;\;\;\;\;\;\;\;\;\;\;\;\;\;\;\;\;\;\;\;\;i = \frac{k}{R} + 1,...,k, \\
 \end{array} \right.
\end{equation}
where
\begin{equation}
 R = c\ln (k/\delta )\sqrt k.
\end{equation}

The parameters c and $\delta$ have significant impact on the performance of LT codes. RSD is obtained by adding the Ideal Soliton distribution \(\rho(i)\)  to \(\Gamma(i)\) and normalizing the result:
\begin{equation}
u(i) = (\rho (i) + \Gamma (i))/\beta ,i = 1,...,k,
\end{equation}
where \(\beta  = \sum\limits_{i = 1}^k {(\rho (i) + \Gamma (i)} )\).

We denote $n$ as the number of encoded symbols and hence the overhead $\gamma $ is defined as \(\gamma  = (n - k)/k\). It is shown in [4] that when $n = k\beta$, the frame error rate is at most $\delta$. Therefore, LT codes based on RSD provide low frame error rates in full recovery regions with medium to large overhead (i.e. $n \ge k\beta$). Yet in low overhead regions (i.e. $k\beta \ge n \ge k$), high BER is observed for LT codes based on RSD. We note that the partial recovery is also very important in some applications such as multimedia content delivery, as mentioned in [7] where the authors focus on intermediate recovery, namely $n \le k$, and only part of source symbols are recovered. Although we also aim to achieve low bit error rate for partial recovery, however, we emphasize that in this paper we focus on \emph{low overhead regions}, where all source symbols will be recovered and $n$ is slightly larger than $k$.

We note that the considered scenario is also suitable for applications with multiple users. For example, the multimedia broadcast/multicast services (MBMS) has a time-limited broadcast delivery phase [12]. As the delivery phase is time-limited in broadcast scenario, the user equipments (UEs) can only collect a limited number of encoded symbols and thus they are in low overhead regions.  Therefore, the optimized code design for low overhead regions is an interesting problem.

\subsection{System Model}

We first introduce the improved LT codes and the scheme for improving FER. Then the codes are extended to multi-source relay scenarios where two source nodes  intend to send their data to the same destination  via a common relay node. At each source node, $k/2$ source symbols are encoded by an LT encoder (according to degree distributions to be discussed later) to generate sequences $E_{1}$ and $E_{2}$ of length $n$ respectively, which are transmitted to the relay node through  error-free channels. At the relay node, one of the $i$-th $(i = 1, 2, \cdots ,n)$ encoded symbols (denoted by $E_{1,i}$ and $E_{2,i}$ for $E_{1}$ and $E_{2}$ respectively) is directly forwarded with probability $\lambda$ (while another is discarded) or they are XORed and forwarded with probability $(1 - \lambda)$. At the receiver node,  iterative decoding is performed to recover $k$ source symbols after sufficient encoded symbols are received.

As mentioned above, RSD is originally proposed to achieve a low FER. Yet, it does not address the problem of minimizing BERs especially in low overhead regions. To solve this problem, we shall modify RSD. Clearly in RSD,  the ripple size is the number of input symbols covered by the degree-1 encoded symbols in the decoding process [4]. The ratio (size) of the ripple is critical for the design of LT codes. If the size of the ripple is too small, then the decoding failure may occur. The spike (a high degree generated from the degree distribution with relatively high probability) is another important parameter of RSD which ensures that all the source symbols are encoded. However, the spike may not be useful (even disadvantage) for LT codes in low overhead regions. It is because that in order to release the high-degree encoded symbols produced by the spike, a large number of source symbols connecting to them need to be released, which in turn requires more encoded symbols received. Thus, it is very unlikely to recover them in low overhead regions. Here we design an improved degree distribution of LT codes which achieves better BERs in low overhead regions.

\section{And-Or tree Analysis}

We first derive an improved degree distribution \(u_i(i)\) for low overhead. The basic idea is to reduce the ratio of spike and appropriately increase the initial size of the ripple. And-Or tree lemma as follows will be utilized in our analysis.

\begin{lemma} (And-Or tree lemma) The probability \(y_\infty = \mathop {\lim }\limits_{l \to \infty } {y_l}\) that a source bit is not recovered is given asymptotically as [10]:
\begin{equation}
\left\{ \begin{array}{l}
 {y_0} = 1, \\
 {y_l} =  \exp ( - (1 + \gamma )\Omega '(1 - {y_{l - 1}})), \\
 \end{array} \right.
\end{equation}
where $l$ is the number of iterations, \(\Omega (x)\) is the check node degree distribution with constant average degree and \(\Omega '(x)\) is its derivative with respect to $x$. Resulting from the And-Or tree lemma, \(y_{l}\) converges to a fixed point asymptotically, which is the BER.
\end{lemma}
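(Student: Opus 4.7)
The plan is to derive the recursion by analyzing a single iteration of belief-propagation (peeling) decoding on the local neighborhood of a randomly chosen source bit, which in the asymptotic regime $k\to\infty$ is well approximated by a random tree---the eponymous And-Or tree. First I would set up the bipartite decoding graph: $k$ source bits (variable nodes) on one side, $n=(1+\gamma)k$ encoded symbols (check nodes) on the other. A check node has degree $d$ with probability $\Omega_d$ and picks its $d$ source neighbors uniformly at random. The induced edge-perspective check degree distribution is therefore $\tilde\Omega_d=d\,\Omega_d/\Omega'(1)$. On the variable side, each source bit has expected check-node degree $n\,\Omega'(1)/k=(1+\gamma)\Omega'(1)$, and in the limit the variable-node degree converges in distribution to a Poisson law with this mean.

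Next I would translate BP over the BEC into the two alternating Boolean operations on the tree. A check node can reveal its root-side neighbor only if ALL of its $d-1$ other source-side neighbors have already been recovered (the AND step); a variable node is recovered as soon as AT LEAST ONE of its check neighbors supplies its value (the OR step). Let $q_l$ denote the probability that a check-to-variable message is still an erasure at iteration $l$. Conditioning on the edge-perspective check degree, the AND step yields
\begin{equation}
1-q_l=\sum_d \tilde\Omega_d\,(1-y_{l-1})^{d-1}=\frac{\Omega'(1-y_{l-1})}{\Omega'(1)}.
\end{equation}
Then applying the OR step at the root variable node and averaging over its Poisson degree gives
\begin{equation}
y_l=\exp\!\bigl(-(1+\gamma)\Omega'(1)\,(1-q_l)\bigr)=\exp\!\bigl(-(1+\gamma)\Omega'(1-y_{l-1})\bigr),
\end{equation}
which together with the initialization $y_0=1$ (no source bit is known before decoding starts) is the claimed recursion. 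Monotonicity of the right-hand side in $y_{l-1}$ then forces $y_l$ to converge to a fixed point $y_\infty$, which equals the asymptotic BER.

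The main obstacle is not the algebra above but justifying the independence assumptions that let us replace the actual decoding graph by the And-Or tree: one must show that, with probability $1-o(1)$ as $k\to\infty$, the depth-$l$ neighborhood of a uniformly random source bit in the random LT bipartite graph is isomorphic to the And-Or tree with the degree laws listed above, and that the empirical fraction of un-recovered source bits concentrates around its expectation. This is the locally-tree-like/concentration argument of Luby--Mitzenmacher--Shokrollahi in [10], typically carried out by a combinatorial edge-exposure expansion together with a Doob martingale (Azuma--Hoeffding) inequality; reproducing it in full detail would dominate the proof, whereas the recursion itself falls out as a one-step computation once the tree model is in hand.
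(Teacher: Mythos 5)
The paper itself offers no proof of this lemma: it is stated as a known result and attributed to reference [10] (Sejdinovic, Piechocki, Doufexi), which in turn builds on the original And-Or tree lemma of Luby, Mitzenmacher and Shokrollahi. So there is no in-paper argument to compare against, and the right standard is whether your reconstruction is a faithful rendering of that standard derivation.

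It is. The edge-perspective check degree law $\tilde\Omega_d = d\Omega_d/\Omega'(1)$, the Poisson$\bigl((1+\gamma)\Omega'(1)\bigr)$ limit for variable degrees, the AND step giving $1-q_l = \Omega'(1-y_{l-1})/\Omega'(1)$, and the OR step via the Poisson generating function giving $y_l = e^{-(1+\gamma)\Omega'(1)(1-q_l)}$ are all exactly the computation in [10]. One feature worth making explicit, since it is what makes the one-variable recursion legitimate: because the variable degree is Poisson, its size-biased (edge-perspective) distribution is a shifted Poisson with the same parameter, so the edge-perspective erasure probability and the node-perspective BER coincide; this is why a single sequence $y_l$ suffices rather than the usual pair of density-evolution equations. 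Your monotonicity argument for convergence is also correct: the update map $y\mapsto\exp(-(1+\gamma)\Omega'(1-y))$ is increasing, $y_1\le y_0=1$, hence the sequence decreases monotonically to a fixed point. You are right to flag that the substantive content of the lemma is the locally-tree-like-plus-concentration justification, which is what the citation is carrying; the recursion itself is a one-line consequence once that is granted. In short, your proposal is the standard proof and matches what the cited source does; the paper simply does not reproduce it.
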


With the lemma, we can design an improved degree distribution based on a target $k$ for LT codes with low overhead. Denote \(\sigma  \in (0,1]\) as the ratio of the spike being reduced from RSD (degree-$1$ node is increased accordingly). Then we can obtain the equations as below to analyze its asymptotic behavior:
\begin{equation}
\left\{ \begin{array}{l}
 {\Omega _{1'}}({x_1}) = {\Omega _1} + \sigma {\Omega _{\frac{k}{R}}} + 2{\Omega _2}{x_1} + 3{\Omega _3}{x_1}^2 \\
 \;\;\;\;\;\;\;\;\;\;\;\;\; + .... + \frac{k}{R}(1 - \sigma ){\Omega _{\frac{k}{R}}}{x_1}^{\frac{k}{R} - 1} + ...., \\
 {\Omega _{2'}}({x_2}) = {\Omega _1} + 2{\Omega _2}{x_2} + 3{\Omega _3}{x_2}^2 + ....\; +  \\
 \;\;\;\;\;\;\;\;\;\;\;\;\;\;\frac{k}{R}{\Omega _{\frac{k}{R}}}{x_2}^{\frac{k}{R} - 1} + ...., \\
 \end{array} \right.
\end{equation}
where \(\Omega_1 ({x_{1}})\) is our designed degree distribution, and \(\Omega_2 ({x_{2}})\) is RSD with the target $k$. Note \({\Omega _{\frac{k}{R}}}\) is the probability that the spike is selected in RSD. In the following, we investigate the iterative process shown by And-Or tree lemma. We replace \(x_{1}\) and \(x_{2}\) by \(x_{t,1}\) and \(x_{t,2}\) to denote their values in the $t$-th iteration, where \(x_{t,1}=1-y_{t-1,1}\) and $x_{t,2}=1-y_{t-1,2}, t \in [1,\infty )$. Firstly the main result of this section is presented as follows.

\begin{proposition}
For LT codes with low overhead where the BER of RSD is larger than a threshold (to be discussed), the degree distribution in which $\sigma, (0 < \sigma \leq 1)$ ratio of spike is reduced (degree-$1$ nodes are increased accordingly) outperforms RSD in BER performance. Moreover, the BER is minimized when $\sigma = 1$.
\end{proposition}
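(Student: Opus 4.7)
My plan is a direct comparison of the two And-Or tree iterations $y_{l,j}=\exp(-(1+\gamma)\Omega'_j(1-y_{l-1,j}))$ for $j=1,2$, aimed at showing that the proposed iteration stays strictly below the RSD iteration at every step and hence at the fixed point. I would first take the algebraic difference of the two derivatives in (6): all monomials other than those of degree $1$ and $k/R$ cancel, leaving the compact identity
\[
  \Omega'_{1}(x)-\Omega'_{2}(x)
   \;=\;\sigma\,\Omega_{k/R}\!\left(1-\frac{k}{R}\,x^{k/R-1}\right).
\]
This reduces the whole comparison to a one-parameter sign question: the gap is nonnegative precisely when $x\le(R/k)^{R/(k-R)}$, equivalently when $y=1-x\ge y_{\mathrm{th}}:=1-(R/k)^{R/(k-R)}$. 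This is the ``threshold'' that the proposition alludes to; substituting $R=c\sqrt{k}\ln(k/\delta)$ shows $y_{\mathrm{th}}\to 0$ as $k\to\infty$, so the hypothesis is mild in any practically interesting low-overhead regime.

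Next I would prove $y_{l,1}\le y_{l,2}$ by induction on $l$. The base case $l=0$ gives $x=0$ and hence $\Omega'_{1}(0)-\Omega'_{2}(0)=\sigma\Omega_{k/R}>0$, so $y_{1,1}<y_{1,2}$. For the inductive step I would use two standard facts: $\Omega'_2$ is non-decreasing on $[0,1]$ because its coefficients are nonnegative, and the iterates $y_{l,2}$ of RSD are themselves monotone nonincreasing in $l$, converging down to $y_{\infty}^{(2)}\ge y_{\mathrm{th}}$ by hypothesis. Thus $x_{l,2}\le 1-y_{\mathrm{th}}$ for every $l$. The inductive hypothesis gives $x_{l,1}\ge x_{l,2}$, and as long as $x_{l,1}\le 1-y_{\mathrm{th}}$ the chain
\[
  \Omega'_{1}(x_{l,1})\;\ge\;\Omega'_{2}(x_{l,1})\;\ge\;\Omega'_{2}(x_{l,2})
\]
yields $y_{l,1}\le y_{l,2}$. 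Passing to $l\to\infty$ gives the desired BER inequality. The complementary case in which $x_{l,1}$ leaves the interval $[0,1-y_{\mathrm{th}}]$ is actually favorable: it means $y_{l,1}$ has already fallen below $y_{\mathrm{th}}\le y_{\infty}^{(2)}$, and since $y_{l,1}$ is monotone nonincreasing, $y_{\infty}^{(1)}\le y_{l,1}<y_{\infty}^{(2)}$ is automatic.

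For the monotonicity in $\sigma$ I would differentiate the same identity:
\[
  \frac{\partial}{\partial\sigma}\Omega'_1(x)=\Omega_{k/R}\!\left(1-\frac{k}{R}\,x^{k/R-1}\right)\;\ge\;0
\]
whenever $x\le 1-y_{\mathrm{th}}$. Because the recursion $y=\exp(-(1+\gamma)\Omega'_1(1-y))$ is order-reversing in $y$ and order-preserving in $\Omega'_1$, the same induction carried out between two parameters $0<\sigma<\sigma'\le 1$ shows the fixed-point BER is non-increasing in $\sigma$, and is therefore minimized at $\sigma=1$.

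The main obstacle I anticipate is the trajectory bookkeeping in the induction. The hypothesis controls $y_{\infty}^{(2)}$, not $y_{\infty}^{(1)}$, so one must ensure that the proposed iterates do not escape the region $x\le 1-y_{\mathrm{th}}$ (where the gap identity has the desired sign) before the comparison $y_{l,1}\le y_{l,2}$ has been established. The two-case argument sketched above handles this, but it is the part that requires genuine care; everything else reduces to polynomial comparison and monotonicity of the And-Or tree map.
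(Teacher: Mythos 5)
Your proposal is correct and follows essentially the same strategy as the paper: reduce the comparison of the two And-Or iterations to the sign of $\Omega'_1-\Omega'_2 = \sigma\,\Omega_{k/R}\bigl(1-\tfrac{k}{R}x^{k/R-1}\bigr)$, extract the threshold $\hat y_{\infty,2}=1-(R/k)^{R/(k-R)}$, propagate the comparison through the iterations, and observe monotonicity of the gap in $\sigma$ to get $\sigma=1$. The paper's (7)--(8) and the final optimization step carry the same content as your gap identity, threshold, and $\partial_\sigma$ computation.

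One minor difference in bookkeeping is worth flagging. You split $\Omega'_1(x_{l,1})-\Omega'_2(x_{l,2})$ as $[\Omega'_1(x_{l,1})-\Omega'_2(x_{l,1})]+[\Omega'_2(x_{l,1})-\Omega'_2(x_{l,2})]$, which evaluates the $\sigma$-sensitive bracket at $x_{l,1}$; since the hypothesis only bounds the RSD chain $x_{l,2}$, you are forced into the two-case argument you describe. The paper instead lower-bounds the $(1-\sigma)\tfrac{k}{R}\Omega_{k/R}x_{t,1}^{k/R-1}$ term by replacing $x_{t,1}$ with $x_{t,2}$ (valid because $x_{t,1}\ge x_{t,2}$ and $1-\sigma\ge 0$), arriving at the lower bound $\sigma\Omega_{k/R}\bigl(1-\tfrac{k}{R}x_{t,2}^{k/R-1}\bigr)$ which is evaluated at the RSD iterate $x_{t,2}\le 1-\hat y_{\infty,2}$ only, so no case split on $x_{t,1}$ is needed. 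Both are valid; your version is a little more laborious but also makes the inductive structure and the monotonicity hypotheses ($\Omega'_2$ nondecreasing, $y_{l,\cdot}$ nonincreasing) explicit, which the paper leaves implicit.
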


\begin{proof}
Based on (1) and (2), we need to find ${\Omega _{1}}'({x_{\infty ,1}}) > {\Omega _{2}}'({x_{\infty ,2}})$ such that our proposed degree distribution has lower BER than RSD for a given \(\gamma \). To begin with, we consider the first iteration, \(\Omega_1 '({x_{1,1}}) = \Omega {}_1 + \sigma {\Omega _{\frac{k}{R}}} > \Omega_2 '({x_{1,2}}) = \Omega {}_1\), given \(y_{0,1}=y_{0,2}=1\). Thus, \({y_{1,1}}\) is smaller than \({y_{1,2}}\). For the subsequent iterations, we must have \(\Omega_1 '({x_{t,1}}) > \Omega_2 '({x_{t,2}})\) to ensure \({y_{t,1}} < {y_{t,2}}\) until the final iteration, achieving \({y_{\infty,1}} < {y_{\infty,2}}\) . Thus, for any iteration $t$ we derive the following inequality which should be satisfied:
\begin{equation}
\begin{array}{l}
 \sigma {\Omega _{\frac{k}{R}}} + \frac{k}{R}(1 - \sigma ){\Omega _{\frac{k}{R}}}{x_{t,1}}^{\frac{k}{R} - 1} - \frac{k}{R}{\Omega _{\frac{k}{R}}}{x_{t,2}}^{\frac{k}{R} - 1} > 0. \\
 \end{array}
\end{equation}

Consequently, to meet (3), we have \(x_{t,1} > x_{t,2}\) in each iteration. More strictly, by substituting $x_{t,2}$ for $x_{t,1}$, the following inequality must be held:
\begin{equation}
\begin{array}{l}
 \sigma {\Omega _{\frac{k}{R}}} + \frac{k}{R}(1 - \sigma ){\Omega _{\frac{k}{R}}}{x_{t,2}}^{\frac{k}{R} - 1} - \frac{k}{R}{\Omega _{\frac{k}{R}}}{x_{t,2}}^{\frac{k}{R} - 1} > 0 \\
  \Rightarrow 1 > \frac{k}{R}{x_{t,2}}^{\frac{k}{R} - 1}. \\
 \end{array}
\end{equation}

During the iterative process, \(x_{t,2}\) keeps increasing.  Thus \(x_{\infty,2}\) is the maximum value. Therefore, we just need to consider the final iteration, where \(f(x_{\infty,2},k)= \frac{k}{R}{x_{\infty,2}}^{\frac{k}{R} - 1}<1 \) should be satisfied to achieve \(\Omega_1 '({x_{\infty,1}})>\Omega_2 '({x_{\infty,2}})\). For a fixed $k$, $f(x_{\infty,2},k)$ increases with larger $x_{\infty,2}$. Thus, for a given $k$,  $x_{\infty,2}$ should be smaller than a threshold. In other words, \(y_{\infty,2}\) should be larger than a certain threshold so that inequality (4) can be satisfied, observing \(x_{\infty,2}=1-y_{\infty,2}\). The BER of LT codes is also decided by $\gamma$. Therefore, if the BER of RSD \(y_{\infty,2}\) is relatively large (larger than a threshold), namely the overhead is small so that $1 > f(x_{\infty,2},k)$ is satisfied then our proposed degree distribution has better BER performance.


We note that in inequality (4) (below part), parameter \(\sigma \) is removed. Thus, the proposed distribution always outperforms RSD in BER performance, if (4) is satisfied. Here we use the optimization method to find the optimal ratio of removed spike when the minimal BER is achieved. Following (4), the optimization problem (with respect to \(\sigma\)) is formulated as follows:
\begin{align}
\begin{array}{l}
 \mathop {\max }\limits_{\sigma} \;\;\;S = \sigma {\Omega _{\frac{k}{R}}} +  \frac{k}{R}(1 - \sigma ){\Omega _{\frac{k}{R}}}{x_{t,2}}^{\frac{k}{R} - 1} \\ \;\;\;\;\;\;\;\;\;\;\;\;\;\;\;\;\;\;- \frac{k}{R}{\Omega _{\frac{k}{R}}}{x_{t,2}}^{\frac{k}{R} - 1}  \\
 \;\;s.t.\;\;\;\;\sigma  \in (0,1],
 \;\;\;\;\frac{k}{R}{x_{t,2}}^{\frac{k}{R} - 1} < 1 \\
 \end{array}.
\end{align}

It is easy to obtain \(\sigma  = 1\), which is a simple but interesting result. Thus, asymptotically, in low overhead regions where the BER of RSD is larger than the threshold, the proposed degree distribution outperforms RSD in BER performance for a fixed $k$. This concludes the proof. We further note that the conclusion also holds non-asymptotically (for small or medium $k$), as it will be shown in our numerical results later.
\end{proof}

Now, the proposed degree distribution $u_i(i)$ can be derived as below. $\rho(i)$ and $\Gamma(i)$ are updated as $\rho_i(i)$ and $\Gamma_i(i)$ as follows
\begin{equation}
\left\{ \begin{array}{l}
 \rho_i (1) = 1/k + R\ln (R/\delta )/k,\\
 \rho_i (i) = 1/i(i - 1),i = 2,...,k \\
 \end{array} \right.
 \end{equation}

\begin{equation}
\Gamma_i (i) = \left\{ \begin{array}{l}
 R/ik,i = 1,...,\frac{k}{R} - 1 \\
 0,i = \frac{k}{R},...,k \\
 \end{array} \right.
\end{equation}

Then $u_i(i)$ is derived as

\begin{equation}
u_i(i) = (\rho_i (i) + \Gamma_i (i))/\beta ,i = 1,...,k.
\end{equation}

\section{Discussions}
\subsection{BER Lower Bound}

For the threshold in Proposition 1, we have the following BER lower bound $\widehat{y}_{\infty,2}$ such that if the BER of RSD \(y_{\infty,2}\) is larger than the lower bound due to small overhead then $1 > f(x_{\infty,2},k) $ is always satisfied to ensure lower BER of our proposed degree distribution compared to RSD [4]. The BER lower bound \(\widehat{y}_{\infty,2}\) can be computed as the critical value of inequality (8)

\begin{equation}
{{\hat y}_{\infty ,2}} = 1 - {(\frac{k}{R})^{(1/(1 - \frac{k}{R}))}}
\end{equation}
and is depicted in Fig. 1. We note that it is the typical property of RSD that the BER is at a very high level thus will exceed the lower bound \(\widehat{y}_{\infty,2}\) when the overhead is low [6], [7].

\begin{figure}[!ht]
\includegraphics[scale=0.37]{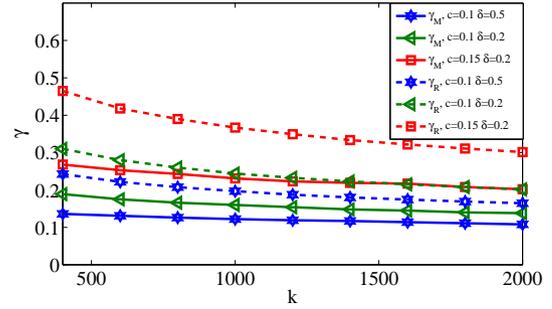}
\caption{ $\gamma_M
$ and $\gamma_R$ versus $k$} \label{system.eps}
\end{figure}
\begin{figure}[!ht] \centering
\includegraphics[scale=0.37]{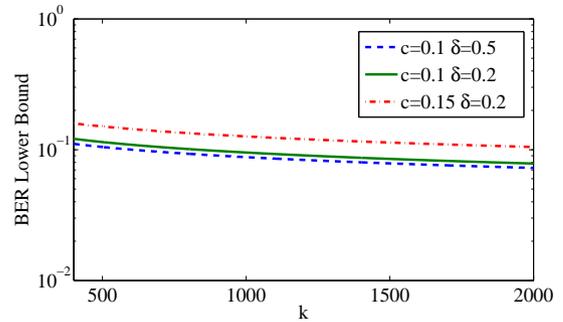}
\caption{ BER Lower bound  \(\widehat{y}_{\infty,2}\)} \label{system.eps}
\end{figure}

\setcounter{figure}{3}
\begin{figure*}[b]
\centering
\includegraphics[scale=0.85]{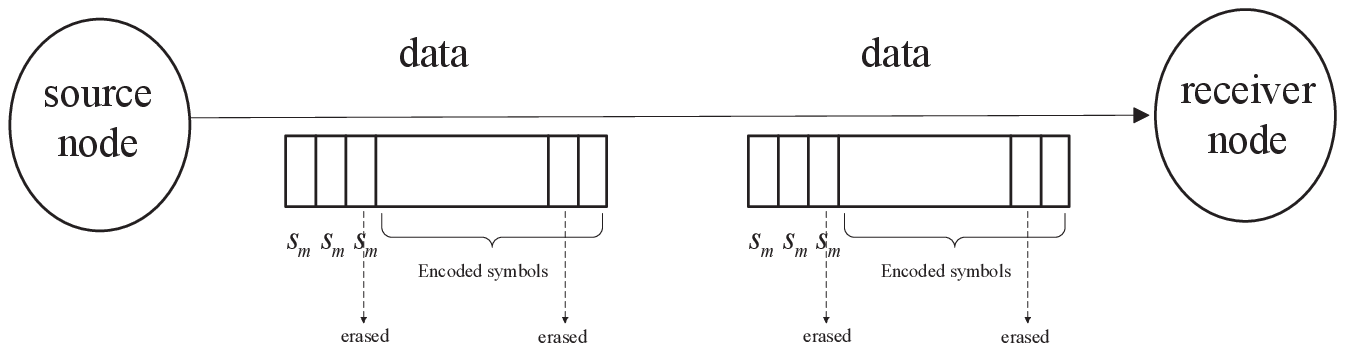}
\caption{The proposed transmission scheme for improving FER}
\end{figure*}

\subsection{Overhead}

The overhead region where the proposed degree distribution has better BER performance than RSD may be a concern. To clarify, we define two parameters $\gamma_R$ and $\gamma_M$, $\gamma_R$ denotes the overhead needed to ensure low frame error rate when RSD is utilized. As mentioned above, $\gamma_R$ is defined as
\begin{equation}
\gamma_R = (k\beta - k)/k
\end{equation}

As to $\gamma_M$, a set $\alpha  = \left\{ {{\gamma _i}} \right\}$ is introduced to describe the overhead region where the proposed degree distribution is ensured to outperform RSD in BER performance. And then $\gamma_M$ is defined as ${\gamma _M} = \max \{ {\gamma _i}\} $. Based on Proposition 1 and the And-Or tree lemma, $\alpha$ is derived according to $\widehat{y}_{\infty,2}$, which means the following conditions
\begin{equation}
\left\{ \begin{array}{l}
 {y_0} = 1, \\
 {{\hat y}_{\infty ,2}} \le \mathop {\lim }\limits_{l \to \infty } \left\{ {{y_l} = \exp ( - (1 + {\gamma _i})\Omega '(1 - {y_{l - 1}}))} \right\} \\
 \end{array} \right.
\end{equation}
should be satisfied for $\forall i$. Then, $\gamma_M$ can be obtained.

Fig. 2 depicts $\gamma_M$ and $\gamma_R$ according to different $k$. It can be seen that $\gamma_M$ is smaller than $\gamma_R$, which means in low overhead regions the proposed degree distribution is ensured to outperform RSD. In fact, the performance gain can still be achieved even when $\gamma > \gamma_M$, which will be shown later in the simulation results.

\subsection{complexity comparison}

In what follows, we give analytical results to compare encoding complexity for RSD and the proposed degree distribution respectively. The complexity is measured by average degree of the encoded symbols. Based on the structure of the proposed degree distribution, it is clear that lower complexity is achieved as the spike is removed and the ratio of degree-one nodes increases. Since \(\Omega '(1)\) is the the average degree of an encoded symbol, we can explicitly evaluate the reduction of encoding complexity \(\Delta \) as
\begin{equation}
\Delta  = {u}'(1) - u_i'(1) = \frac{k}{R}{\Omega _{\frac{k}{R}}} - {\Omega _{\frac{k}{R}}},
\end{equation}
where ${u}'(i)$ and \(u_i'(i)\) are derivatives of RSD and \(u_i(i)\) with the same $k$.

In Fig. 3, we show the complexity comparison of RSD and the improved degree distribution. It can be seen from the figure that lower complexity is achieved by the improved degree distribution.
\setcounter{figure}{2}
\begin{figure}[!ht] \centering
\includegraphics[scale=0.37]{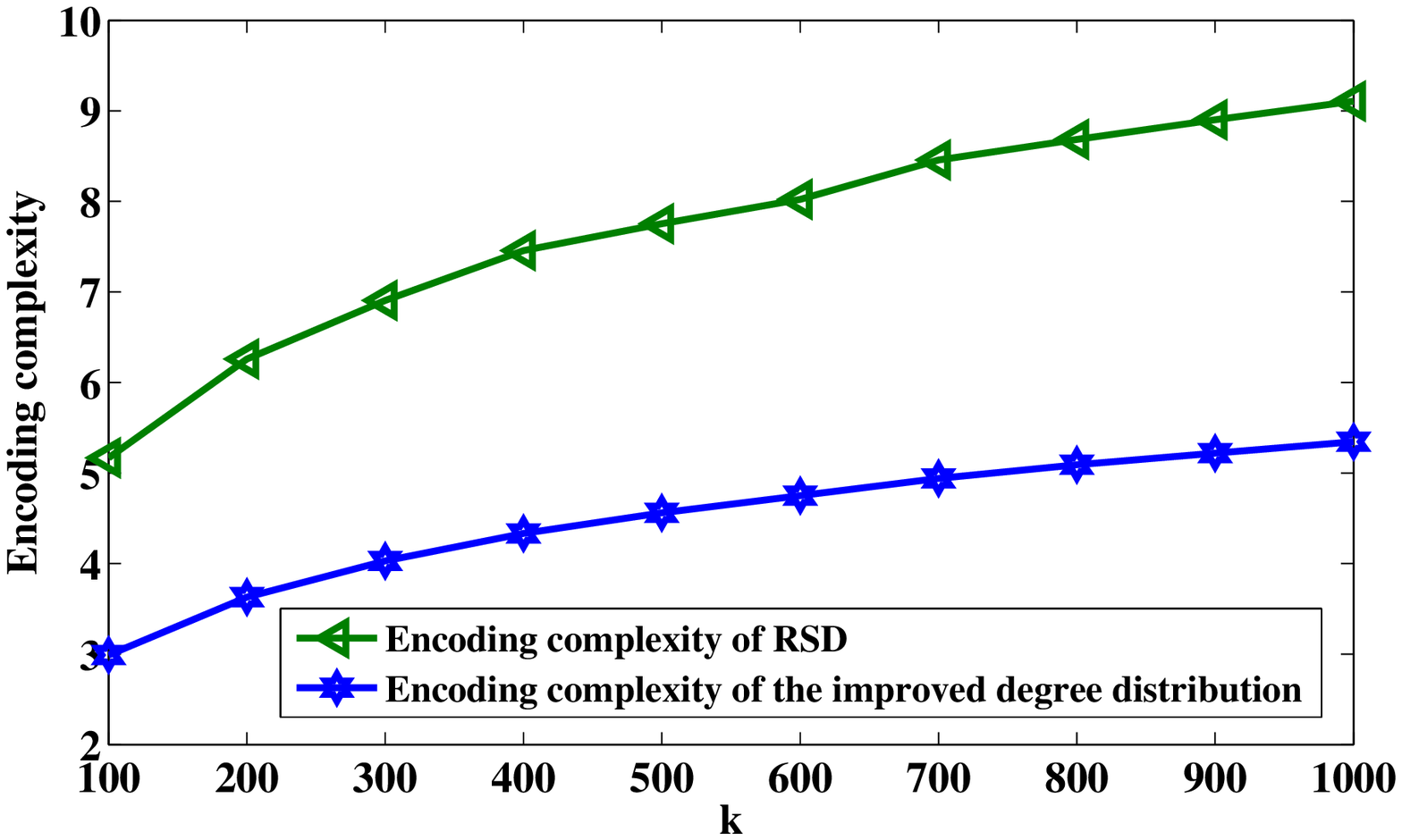}
\caption{Encoding complexity comparison, $c=0.15, \delta=0.2$} \label{system.eps}
\end{figure}

Finally, we would like to mention that we could reduce spike and increase degree-$2$ or $3$ nodes and so on. However, this will lead to a degrading performance of decoding probability as well as higher encoding and decoding complexity. Therefore, we only consider the scenario where the spike is removed and degree-$1$ nodes are increased, namely increasing the size of ripple  and reducing the size of spike.

\section{Improved FER for the improved degree distribution}

Although we focus on the code design in low overhead regions, it is straightforward that our codes can be extended to medium to large overhead (full recovery regions). Yet degraded FER performance in full recovery regions is observed compared with RSD due to reduced spike. In order to compensate for the degraded FERs, we propose a transmission scheme that has better BER performance in low overhead regions than RSD while trying to improve its FER performance in full recovery regions.

The degraded FER is caused by lower variable and check node degrees in the proposed degree distribution. Therefore, there might be some source symbols that are not covered by encoded symbols in full recovery regions. Considering the generator matrix of the LT encoder with a fixed $k$, the number of source symbols that are not covered by encoded symbols $m$ can be approximately determined as
\begin{equation}
m = k{(1 - \frac{{u_i'(1)}}{k})^n}.
\end{equation}

In full recovery regions, $m$ is smaller than one with $k$ in the thousands (when $n=k\beta$), which means mostly only one source symbol is not covered. For example, it can be verified that $m=0.68$ when $k=500$, $m=0.65$ when $k=2000$, and $m=0.618$ when $k=5000$, all for $c=0.15, \delta=0.2$. Based on above observations, we propose a transmission scheme for the proposed degree distribution to recover the source symbol, in order to improve the FER performance in full recovery region. The scheme is shown in Fig. 4. At the source node, after the entire encoded symbols are generated by the LT encoder, a special encoded symbol $s_m$ that is the sum (in binary) of all $k$ source symbols is generated and then all the encoded symbols are transmitted to the destination node. Assume that in the full recovery region, the decoder gets stalled when $k-1$ source symbols have been recovered. Then we obtain the sum of all $k-1$ recovered symbols $s_{r}$ and compute the bitwise XOR of $s_{r}$ and $s_m$. Thus, the source symbol that has not been recovered previously can be reconstructed. Clearly the FER is greatly impacted by the reception of $s_m$, which may be also lost in erasure channels.  Thus, we may need to send multiple $s_m$. Then the probability of not receiving $s_m$ is negligible.

\section{Application to DLT and unequal error protection}

DLT codes were first introduced in [8] by using a deconvolution method. Yet a direct deconvolution of RSD does not necessarily yield a valid probability distribution. As discussed in [8], if we try to reproduce $u(i), i>1 $ by recursively solving $\rho(i)$ from direct deconvolution 
\begin{equation}\label{deconvolution}
\begin{array}{l}
 u(2) = \rho(1)\rho(1), \\
 u(3) = \rho(1)\rho(2) + \rho(2)\rho(1), \\
 u(4) = \rho(1)\rho(3) + \rho(2)\rho(2) + \rho(3)\rho(1), \\
 u(5) = \rho(1)\rho(4) + \rho(2)\rho(3) + \rho(3)\rho(2) + \rho(4)\rho(1), \\
 ...... \\
 \end{array}
 \end{equation}
then we obtain negative value for $\rho(\frac{k}{R})$ due to the existence of the spike at $i=\frac{k}{R}$, which is not acceptable.
The DLT scheme in [8] is proposed to solve this problem. In DLT, $u(i)$ is divided into two parts, namely $u'(i)$ which contains degree $i=1, \frac{k}{R}$ and $u''(i)$ which contains the remaining degrees. The deconvolution is operated over $u''(i)$.  However it cannot provide the UEP for different sources. To address this problem, we apply our proposed degree distribution to  provide the UEP property for sources with different importance.

We propose distributed low-overhead LT (DLLT) codes by applying our degree distribution. Since the spike does not exist in the improved degree distribution, the deconvolution in (\ref{deconvolution}) can be directly operated. First the improved degree distribution \(u_{i}(i)\) is divided into two parts \(u_{i1}(i)\) and \(u_{i2}(i)\) as follows
\begin{equation}
u_{i1}(i) = \frac{{\rho_i (1) + \Gamma_i (1)}}{{\beta '}},
\end{equation}
\begin{equation}
u_{i2}(i) = \left\{ \begin{array}{l}
 \frac{{\rho_i (i) + \Gamma_i(i)}}{{\beta ''}}\;\;\;\frac{k}{R} - 1 \ge i \ge 2 \\
 \frac{{\rho_i (i)}}{{\beta ''}}\;\;\;\;\;\;\;\;\;\;\;\;\;k\; \ge i \ge \frac{k}{R},\\
 \end{array} \right.
\end{equation}
where \(\beta ' = \rho_i (1) + \Gamma_i (1)\) and \(\beta '' = \sum\limits_{i = 2}^k {u_{i2}(i)} \). Note $u_{i1}(i)$ only contains degree-1 nodes rather than the high degree-$\frac{k}{R}$ nodes which is hard to decode in low overhead regions. It simplifies the system providing the UEP, which will be discussed later.

As our degree distribution does not contain the spike, we could directly employ the deconvolution in the high degree (degrees larger than 1) part. The target degree distribution at source nodes \(p_{i}(i)\) is derived as:
\begin{equation}
p_{i}(i) = \sqrt {\frac{{\beta ''}}{\beta }} f_{i}(i) + (1 - \sqrt {\frac{{\beta ''}}{\beta }} )u_{i1}(i),
\end{equation}
where  \(f_{i}(i)*f_{i}(i) = u_{i2}(i)\).

The transmission scheme in DLLT is similar to that outlined in [8]. After receiving enough encoded symbols, the destination node can recover the initial source symbols by an LT decoding method, since the resulted codes are LT-like codes. Also since we use the improved degree distribution in DLLT codes, the BER performance of DLLT codes outperforms those of DLT codes in low overhead regions.

It is worth noting that based on the simplified structure of our proposed degree distribution, the DLLT transmission can be easily applied to an UEP-DLLT scheme (as follows). We assume symbols in one source node are more important than those in  another and they are referred to as more important bits (MIB) and less important bits (LIB) correspondingly. By slightly modifying the DLLT scheme above, the UEP property can be achieved.
We use a parameter \(s\) as the \emph{UEP factor}. Based on the DLLT transmission scheme above, \(p_{i}(i)\) is modified as:
\begin{equation}
p_{i}(i) = \left\{ \begin{array}{l}
 (\sqrt {\frac{{\beta ''}}{\beta }}  - s)f_i(i) + (1 - \sqrt {\frac{{\beta ''}}{\beta }}  + s){u_{i1}}(i) \;\;\;\;\;\;\;\\(source\:node\:of\:MIB),\\
 (\frac{{\frac{{\beta ''}}{\beta }}}{{\sqrt {\frac{{\beta ''}}{\beta }}  - s}})f_i(i) + (1 - \frac{{\frac{{\beta ''}}{\beta }}}{{\sqrt {\frac{{\beta ''}}{\beta }}  - s}}){u_{i1}}(i)\:  \;\;\;\;\;\;\;\;\;\;\;\\(source\:node\:of\:LIB).\\
 \end{array} \right.
\end{equation}

We denote \({p_1} = (1 - \sqrt {\frac{{\beta ''}}{\beta }}  + s) \) and \({p_2} = {(1 - \frac{{\frac{{\beta ''}}{\beta }}}{{\sqrt {\frac{{\beta ''}}{\beta }}  - s}})}\), where \({p_1}\) and \({p_2}\) are the selecting probability of \(u_{i1}(i)\) at MIB and LIB source node correspondingly. It is clear to see \({p_1} > {p_2}\). Thus, according to the DLLT transmission scheme, the encoded symbols will cover MIB with higher probabilities  while the source symbols in LIB are less likely to be covered by encoded symbols. More source symbols of MIB can be encoded into degree-1 encoded symbols at the receiver side, which in turn provide fast recovery of MIB when the receiver is in low overhead regions. Therefore MIB can achieve better protection, whereas LIB will have larger BER. Moreover, it is easy to see with \(s\) increasing, the gap in BER between MIB and LIB becomes greater, which will later be shown in the simulation results. We note that the UEP for distribtued LT codes is also studied in  \cite{TalariRa2012} mainly for error-floor regions. Yet our schemes are mainly for low overhead regions.

\section{Simulation results}

In this section, we present numerical results of our proposed schemes and compare them with some relevant results. The channels in all the simulations are assumed to be perfect, where the best performance can be achieved.

First, it is insightful to compare our results with those in [7], which is shown in Fig. 5. We note that the authors in [7] studied the data recovery in intermediate range. On the other hand, our work focuses on the low overhead regions.

For comparison, we select two degree distributions from Table.1 in [7], both designed for  $k=1000$. \(W_1 = 0.0624x + 0.5407{x^2} + 0.2232{x^4} + 0.1737{x^5}\) is designed for  $n=k$. \(W_2 = 0.1448x + 0.8552{x^2}\) is designed for $n=0.75k$. From the results it can be seen that since the work in [7] focuses on the intermediate range, the degree distribution $W_1$ and $W_2$ have better BER performance than our proposed degree distribution in intermediate range. In detail, $W_1$ has the best performance with $k=1000$ and  $W_2$ has the best performance with $k=750$. In low overhead regions ($k$ slightly larger than 1000), our proposed degree distribution has the best performance. Note RSD performs poorly in intermediate and low overhead regions.
\setcounter{figure}{4}
\begin{figure}[!ht] \centering
\includegraphics[scale=0.37]{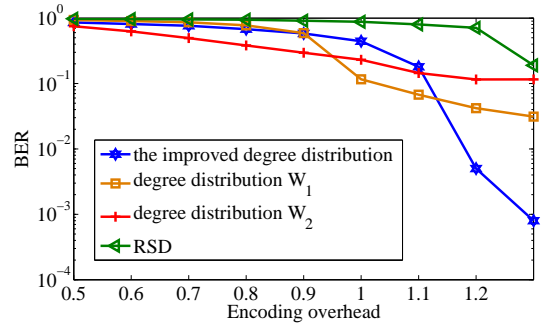}
\caption{ BER comparison of the three degree distributions, \( k=1000, c=0.15, \delta=0.2 \).}
\end{figure}

\begin{figure}[!ht] \centering
\includegraphics[scale=0.37]{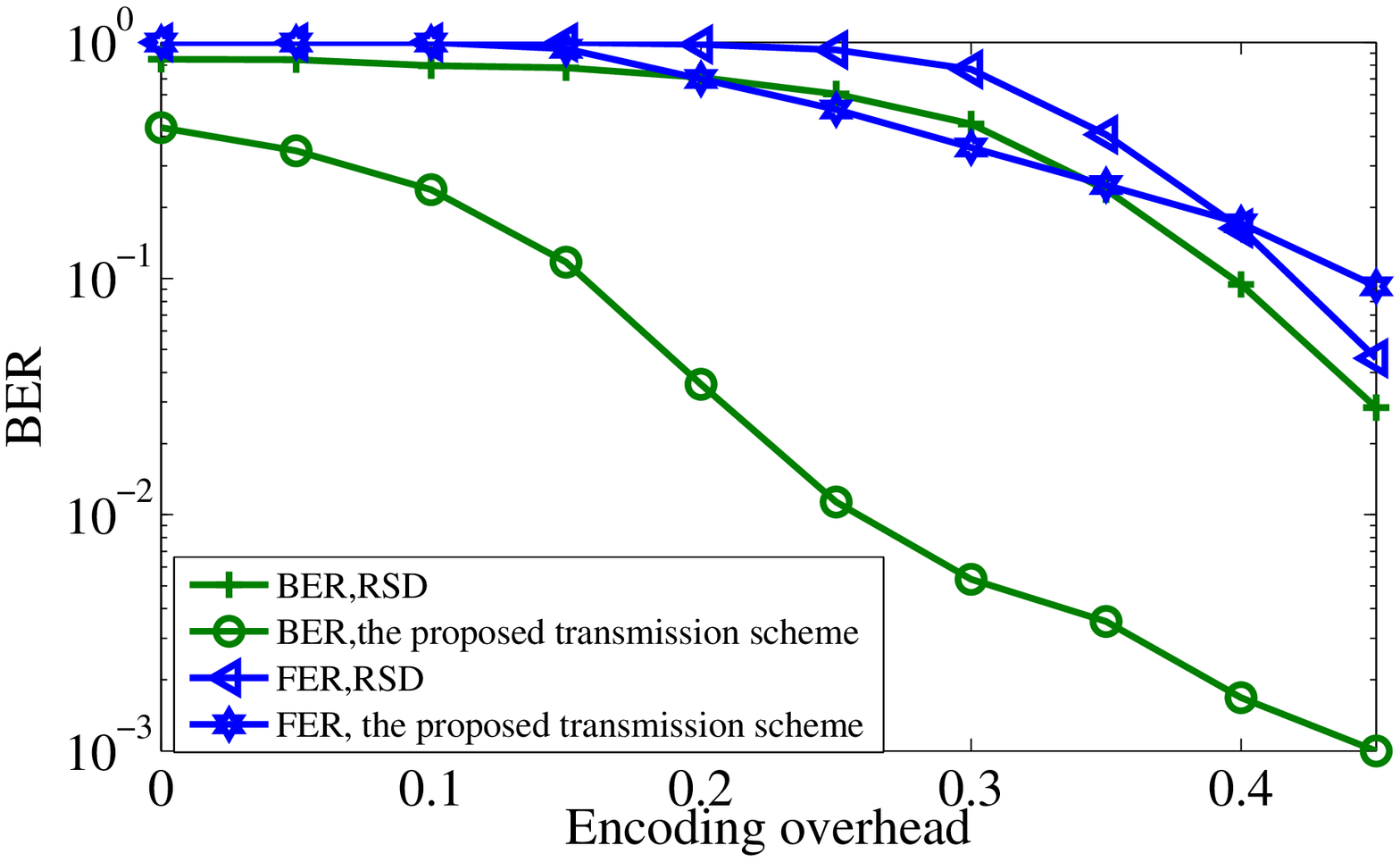}
\caption{BER and FER comparison of RSD and our proposed scheme, \( k=400, c=0.15, \delta=0.2 \)}
\end{figure}

\begin{figure}[!ht] \centering
\includegraphics[scale=0.37]{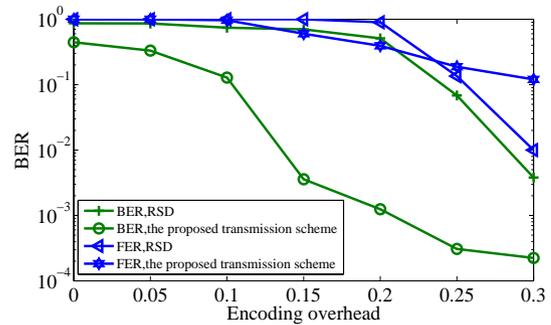}
\caption{BER and FER comparison of RSD and our proposed scheme, \( k=2000, c=0.15, \delta=0.2 \)}
\end{figure}
Fig. 6 and 7 show the BER and FER performance of the proposed transmission scheme for k=400 and 2000, respectively. The total number of encoded symbols is set to $n$, which means in the proposed transmission scheme, $n-2$ encoded symbols are generated by the LT encoder and two special encoded symbols are used for full recovery. It is demonstrated that the proposed transmission scheme based on our improved degree distribution has lower BER than RSD, especially in low overhead regions. Note when the overhead $\gamma$ is smaller than $\gamma_M$, our proposed codes outperform RSD in BER. Therefore the simulation results match well with the analysis in section IV.

Fig. 8 shows the UEP property of the proposed UEP-DLLT scheme. We can clearly see that MIB has better successful rate (1 - BER) than LIB especially with larger $UEP factor$, where successful rate means the number of the recovered source symbols normalized by the source block length.

\begin{figure}[!ht] \centering
\includegraphics[scale=0.37]{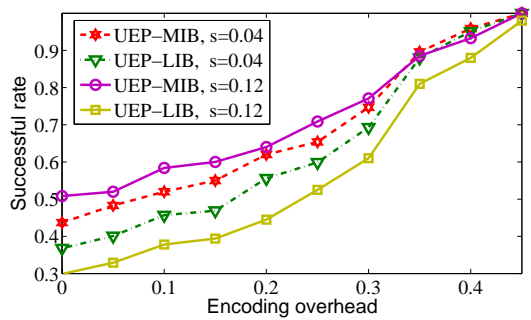}
\caption{ Successful rate comparison of UEP-MIB and UEP-LIB with difference $UEP factor$, \( k=400, c=0.15, \delta=0.2\) }
\end{figure}

\section{Conclusions}
\balance
We have proposed an improved degree distribution for LT codes which shows improved BER performance in low overhead regions and reduced complexity as compared to the conventional RSD. It is demonstrated through And-Or tree analysis that the proposed degree distribution outperforms RSD in BER performance with low overhead. To improve FERs in full recovery regions, we then proposed a transmission scheme based on the proposed degree distribution. Simulation results show the improved performance of the proposed transmission scheme. Finally, we extend our degree distribution to DLT codes for multi-source relay networks. It shows improved performance and meanwhile the UEP property is also achieved with our schemes.

\acks
This work was supported in part by China Mobile Research Institute and China National S\&T Major Project 2010ZX03003-003.

\end{document}